\newcommand{\bra}[1]{\langle#1|}
\newcommand{\ket}[1]{|#1\rangle}
\newcommand{\beq}{\begin{equation}}
\newcommand{\eeq}{\end{equation}}
\newtheorem{theorem}{Theorem}
\newtheorem{lemma}{Lemma}
\newtheorem{exmp}{Example}
\newcommand{\tens}[2]{\mathbin{\mathop{\otimes}\limits_{#1}^{#2}}}
\DeclareMathOperator*{\argmax}{arg\,max}
\begin{document}

\title{Method for quantum-jump continuous-time quantum error correction}
\author{Kung-Chuan Hsu}
  \email{kungchuh@usc.edu}
\author{Todd A. Brun}
  \email{tbrun@usc.edu}
\affiliation{Department of Electrical Engineering, University of Southern California, Los Angeles, California 90089, USA}

\begin{abstract}
Continuous-time quantum error correction (CTQEC) is a technique for protecting quantum information against decoherence, where both the decoherence and error correction processes are considered continuous in time. Given any [[$n,k,d$]] quantum stabilizer code, we formulate a class of protocols to implement CTQEC, involving weak coherent measurements and weak unitary corrections. Under this formalism, we show that the minimal required size of the ancillary system is $n-k+1$ qubits, and we propose one scheme that meets this minimal requirement. Furthermore, we compare our method with other known schemes, and show that a particular measure of performance described in this paper is better when using our method.
\end{abstract}

\pacs{03.67.Pp, 42.50.Lc}

\maketitle


\section{Introduction}
\label{I}

Continuous-time quantum error correction (CTQEC) studies quantum error correction when both decoherence and error correction are treated as continuous in time. One of the first approaches to CTQEC was proposed by Paz and Zurek~\cite{Paz98}, where the process of error correction is modeled as a weak quantum-jump operation taking a small time $\delta t$. In the limit of $\delta t$ going to zero, the evolution of the code system can be described by a continuous-time master equation. In~\cite{OreshkovChapter}, Oreshkov proposed a quantum-jump CTQEC protocol and reviewed various topics on CTQEC not restricted to the quantum-jump model. Another scheme for CTQEC was proposed by Ahn, Doherty, and Landahl (ADL)~\cite{Ahn02} involving the method of feedback control by estimation~\cite{Doherty99}. Several studies based on this scheme have been published since then (see, e.g.,~\cite{Ahn03,Chase08}). Other work or resource on CTQEC and continuous-time quantum feedback control includes~\cite{Wiseman94,Doherty00,Brun02,Ahn04,Jacobs04,Sarovar04,Sarovar05,Jacobs06,Oreshkov07,Mascarenhas10,Ippoliti15,WisemanMilburn,Jacobs}.

In Sec.~\ref{II}, we introduce the technical background needed for the rest of this paper. In Sec.~\ref{III}, we discuss a class of protocols that perform quantum-jump CTQEC. Furthermore, we show the minimal requirement on the size of the ancillary system and propose a protocol that meets this requirement. In Sec.~\ref{IV}, we compare this protocol to those of Oreshkov and ADL. In addition, we investigate whether optimizing the parameter in the error-correction process of Oreshkov yields any significant improvement in performance, and we show that the improvement is negligibly small in general.


\section{Preliminaries}
\label{II}

In this section, we review background knowledge and notation for quantum stabilizer codes and quantum-jump CTQEC sufficient for the reader to understand the content of this paper.

\subsection{Structure of stabilizer codes}
\label{IIA}

Let $C$ be a stabilizer code that protects $k$ qubits by encoding them into $n$ qubits. Associated with $C$ is a set of $n-k$ stabilizer generators $G=\{g_1,g_2,...,g_{n-k}\}\subset\mathcal{G}^n$, where $\mathcal{G}^n=\{i^j \tens{l=1}{n} O_l: j\in\{0,1,2,3\}, O_l\in\{I,X,Y,Z\},\forall l\}$ is the $n$-fold Pauli group, and $X,Y,Z$ are the Pauli operators. The stabilizer group $S$ is an Abelian subgroup of $\mathcal{G}^n$ that is generated by the stabilizer generators and does not contain $-I^{\otimes n}$. The codespace is defined to be the space $C_S=\left\{\ket{\psi}\in\mathcal{H}^n : g\ket{\psi}=\ket{\psi},~\forall~g\in S\right\}$, where $\mathcal{H}^n$ is the $2^n$-dimensional Hilbert space of $n$ qubits. The code can correct any set of correctable errors $E=\{E_j\}_{j\in J}$ where $E_j E_l \in S\cup\left(\mathcal{G}^n-N(S)\right)~\forall~j,l\in J$, where $J$ is an index set and $N(S)$ is the normalizer of $S$.

The above outlines the structure of stabilizer codes. For more detail, see~\cite{Gottesman97}. In the following, we will introduce three bases of $\mathcal{H}^n$ that are useful in analyzing stabilizer codes:

1. ``Physical basis": The physical basis is the simplest basis, in the sense that physical errors act on the state that is represented in this basis. In this basis, the codespace is $C_S$ as defined above.

2. ``Encoded basis": The state represented in the encoded basis has a set of $k$ information qubits that are to be protected, and the remaining $n-k$ qubits are the syndrome qubits that represent the redundancies used in the error-correcting code. Without loss of generality, we will assign the first $k$ qubits in this basis as the information subsystem and the last $n-k$  qubits as the syndrome subsystem. In this basis, a state is in the codespace if and only if the syndrome qubits are in the state $\ket{0}^{\otimes n-k}$. There exists an encoding unitary $U_\mathcal{E}$ that takes the operators and states represented in the encoded basis to the physical basis.

3. ``Corrected basis": The state in the corrected basis results from applying a correcting unitary to the state in the encoded basis such that any correctable error acting on a codeword state leaves the information subsystem invariant. The correcting unitary, $U_\Gamma$ leaves the syndrome subsystem unchanged, and it applies a unitary to the information system conditioned on which syndrome subspace the state is in. It can be written in the following form in terms of the set of standard basis states of the syndrome subsystem, $\left\{\ket{\underline{j}}\right\}$, and some set of unitary operators, $\left\{U_{\Gamma,j}\right\}$, where $j\in\{0,1,...,2^{n-k}-1\}$:
\beq\label{eq:IIA01}
U_\Gamma=\sum\limits_{j=0}^{2^{n-k}-1} U_{\Gamma,j}\otimes \ket{\underline{j}}\bra{\underline{j}}.
\eeq

Specifically, each $\ket{\underline{j}}$ is defined as a tensor product of standard basis states $\ket{0}$ and $\ket{1}$ in the following way:
\begin{align}
\ket{\underline{j}} &=\ket{j_{n-k}}\otimes\ket{j_{n-k-1}}\otimes...~\ket{j_1}\label{eq:IIA02} \\
&=\ket{j_{n-k}~j_{n-k-1}~...~j_1},\label{eq:IIA03}
\end{align}
where $j_l\in\{0,1\}$ satisfies
\beq\label{eq:IIA04}
j=\sum\limits_{l=1}^{n-k} j_l 2^{l-1}.
\eeq

In other words, $j_{n-k}~j_{n-k-1}~...~j_1$ is the binary representation of $j$ with the most significant digit on the left, and an underline is used to differentiate the index as a decimal number from its binary representation.

The following illustrates the contents in this section using the three-qubit bit-flip code.

\begin{exmp}\label{exmp1}
The three-qubit bit-flip code encodes $k=1$ qubit into $n=3$ qubits. The $n-k=2$ stabilizer generators may be chosen as:
\beq\label{eq:IIA05}
g_1=Z\otimes Z\otimes I,~g_2=Z\otimes I\otimes Z.
\eeq

A correctable set of errors for this code contains the trivial error, $E_{0}=I^{\otimes 3}$, and the bit-flip errors on each of the $n=3$ qubits:
\beq\label{eq:IIA06}
E_{1}=X\otimes I\otimes I,~E_{2}=I\otimes X\otimes I,~E_{3}=I\otimes I\otimes X.
\eeq

An encoding unitary for this code is:
\beq\label{eq:IIA07}
U_\mathcal{E}=\ket{0}\bra{0}\otimes I\otimes I+\ket{1}\bra{1}\otimes X\otimes X.
\eeq

Now, by applying the inverse encoding unitary, $U_\mathcal{E}^{-1}$, we can represent the operators in the encoded basis. Adding the subscript ``(E)" to differentiate them from their original form in the physical basis, they are:

\beq\label{eq:IIA08}
g_{1(E)}=I\otimes Z\otimes I,~g_{2(E)}=I\otimes I\otimes Z.
\eeq
and
\begin{align}
E_{0(E)}&=I^{\otimes 3}~,~~ E_{1(E)}=X\otimes X\otimes X,\label{eq:IIA09} \\
E_{2(E)}&=I\otimes X\otimes I~~,~~ E_{3(E)}=I\otimes I\otimes X.\label{eq:IIA10}
\end{align}

In the encoded basis, just like the physical basis, the correctable error $E_{0(E)}$ leaves any state invariant. So $U_{\Gamma,0}=I$.

On the other hand, the correctable error $E_{1(E)}$ flips every qubit in the system. When the state was previously in a codeword state, $E_{1(E)}$ brings the syndrome qubits from the $\ket{\underline{0}}$-subspace to the $\ket{\underline{3}}$-subspace. Since the information qubit has been flipped as well, the corresponding unitary correction is to apply a bit-flip operator so that the information qubit is flipped back, that is, $U_{\Gamma,3}=X$.

Similarly, one may find that $U_{\Gamma,1}=I,~U_{\Gamma,2}=I$.

Therefore, the correcting unitary defined earlier in this example is:
\begin{align}
U_\Gamma&=\sum\limits_{j=0}^3 U_{\Gamma,j}\otimes\ket{\underline{j}}\bra{\underline{j}}\label{eq:IIA11} \\
   &=I\otimes\left( \ket{\underline{0}}\bra{\underline{0}}+\ket{\underline{1}}\bra{\underline{1}}+\ket{\underline{2}}\bra{\underline{2}} \right)+X\otimes\ket{\underline{3}}\bra{\underline{3}}.\label{eq:IIA12}
\end{align}

To represent the operators in the corrected basis, one applies the unitary $U_\Gamma U_\mathcal{E}^{-1}$ to the operators in the physical basis. We denote the operators in the corrected basis by adding the subscript ``(C)," and they are:
\beq\label{eq:IIA13}
g_{1(C)}=I\otimes Z\otimes I,~g_{2(C)}=I\otimes I\otimes Z,
\eeq
and
\begin{align}
E_{0(C)}&=I^{\otimes 3},\label{eq:IIA14} \\
E_{1(C)}&=I\otimes\left(\ket{\underline{0}}\bra{\underline{3}}+\ket{\underline{3}}\bra{\underline{0}}\right)+X\otimes\left(\ket{\underline{1}}\bra{\underline{2}}+\ket{\underline{2}}\bra{\underline{1}}\right),\label{eq:IIA15} \\
E_{2(C)}&=I\otimes\left(\ket{\underline{0}}\bra{\underline{2}}+\ket{\underline{2}}\bra{\underline{0}}\right)+X\otimes\left(\ket{\underline{1}}\bra{\underline{3}}+\ket{\underline{3}}\bra{\underline{1}}\right),\label{eq:IIA16} \\
E_{3(C)}&=I\otimes\left(\ket{\underline{0}}\bra{\underline{1}}+\ket{\underline{1}}\bra{\underline{0}}\right)+X\otimes\left(\ket{\underline{2}}\bra{\underline{3}}+\ket{\underline{3}}\bra{\underline{2}}\right).\label{eq:IIA17}
\end{align}

Any correctable error leaves the information qubit of a codeword state invariant. The only effect of a correctable error acting on a codeword state is to bring the syndrome system from the $\ket{\underline{0}}$-subspace to a nontrivial syndrome subspace.
\end{exmp}

As will be seen, we will mainly work in the corrected basis since it simplifies the analysis of a CTQEC protocol. However, some caution is required. As written, the corrected basis looks like a tensor product of bases on each of the qubits, but it is not. In the example, the states $\ket{0}$ and $\ket{1}$ for the system qubit are different in syndrome $\ket{\underline{3}}$ than in syndromes $\ket{\underline{0}}$, $\ket{\underline{1}}$ and $\ket{\underline{2}}$. We will therefore also use the operators represented in the physical basis for practical purposes.

\subsection{The model of quantum-jump CTQEC}
\label{IIB}

In conventional quantum error correction, the system undergoes an error-correcting operation, which is characterized by a CPTP mapping $\mathcal{R}\left(\cdot\right)$, consisting of a syndrome measurement followed by a unitary correction operation based on the measurement result.

In quantum-jump CTQEC, a weak version of the error-correcting operation is applied repeatedly on the system. Let $\delta t$ be the small time step of a single instance of the weak error-correcting operation. The infinitesimal error correction map is:
\beq\label{eq:IIB01}
\rho\xrightarrow{\delta t} \left(1-\kappa\delta t\right)\rho+\kappa\delta t\mathcal{R}(\rho),
\eeq
where $\kappa$ is the rate of error correction.

In the limit of $\delta t\rightarrow 0$, the above infinitesimal error-correcting operation approaches a master equation describing the system evolution under error correction:
\beq\label{eq:IIB02}
\frac{d\rho}{dt}=\kappa\left(\mathcal{R}(\rho)-\rho\right).
\eeq

In addition, the system undergoes evolution caused by decoherence. We model the decoherence as Markovian with the system Hamiltonian $H$ and a set of Lindblad operators $\{L_j\}$. Then the full master equation becomes
\beq
\frac{d\rho}{dt}=\mathcal{L}(\rho)+\kappa\left(\mathcal{R}(\rho)-\rho\right),
\eeq
where
\beq
\mathcal{L}(\rho)=-i\left[H,\rho\right]+\sum\limits_{j}\left(L_j\rho L_j^\dag-\frac{1}{2}L_j^\dag L_j\rho-\frac{1}{2}\rho L_j^\dag L_j\right).
\eeq

In this study, we will use quantum error-correcting codes and error models based on standard quantum error correction. Our main goal in exploring quantum-jump CTQEC is to find a method to realize the weak mapping given by Eq.~(\ref{eq:IIB01}).

A general way of realizing this weak mapping involves the following steps:

(1) Couple the primary system to an ancillary system.

(2) Apply a weak unitary to both systems.

(3) Weakly measure and then discard the ancillary system.

(4) Apply a weak unitary correction, conditioned on the measurement result, to the system.

In this paper, we will propose a specific realization of CTQEC based on the above scheme, which will require analysis on the ancillary system used in the coherent measurement steps. Throughout this paper, we will denote a basis state of the ancillary system as $\ket{\underline{j}_a}$, where the subscript ``$a$" signifies that the state is of the ancillary system, and the index $j$ runs from $0$ to the dimension of the ancillary system minus one.


\section{Quantum-jump CTQEC with minimal ancillas}
\label{III}

In~\cite{OreshkovChapter}, a quantum-jump CTQEC protocol was proposed that requires the size of the ancillary system to be the dimension of the syndrome space minus one. That is, for a syndrome system of size $n-k$, an ancillary system of size $2^{n-k}-1$ is required. This exponential overhead in the size of the ancillary system would be a very expensive resource in practice. In this section, we propose a quantum-jump CTQEC protocol that requires only linear overhead, $n-k+1$ to be exact, of ancillary qubits to perform the same task. We will also show that $n-k+1$ is in fact optimal. The main idea is as the following.

Consider CTQEC using an arbitrary [[$n,k,d$]] stabilizer code. In the corrected basis, without loss of generality, we assume the information subsystem is the first $k$ qubits. The last $n-k$ qubits form the syndrome subsystem which absorbs the effect of any correctable error on the code. The syndrome system is initially prepared in the trivial state $\ket{0}^{\otimes n-k}$. When a correctable error happens, some syndrome qubits may be flipped to $\ket{1}$, while the information qubits are left invariant. The goal of error correction is obviously to flip the syndrome qubits back to the trivial state $\ket{0}^{\otimes n-k}$ so that the syndrome subsystem is capable of absorbing any new correctable error that may occur. Such an error-correcting operation may be described in the corrected basis as a CPTP map $\mathcal{R}(.)$:
\beq\label{eq:III01}
\rho\longrightarrow \mathcal{R}(\rho)=\sum\limits_{j=0}^{2^{n-k}-1} R_j\rho R_j^\dag,
\eeq
where
\beq\label{eq:III02}
R_j=I^{\otimes k}\otimes\ket{\underline{0}}\bra{\underline{j}}~,~~\forall~j\in\{0,1,...,2^{n-k}-1\},
\eeq
and $\rho$ is the density operator of the encoded system.

With the given error-correcting map $\mathcal{R}\left(\cdot\right)$, the weak error-correcting map is
\beq\label{eq:III03}
\rho\rightarrow\left(1-\varepsilon^2\right)\rho+\varepsilon^2\mathcal{R}(\rho).
\eeq

Note that, as in Eq.~(\ref{eq:IIB01}), if the map takes time $\delta t$ we define a rate $\kappa$ by
\beq\label{eq:III04}
\varepsilon=\sqrt{\kappa\delta t}.
\eeq

In Sec.~\ref{IIIA}, we first give the structure of a class of methods that implements the CTQEC map in Eq.~(\ref{eq:III03}). In Sec.~\ref{IIIB}, we discuss the requirements on the ancillary system in the coherent measurement steps, and we show that the size of the ancillary system must be at least $n-k+1$. In Sec.~\ref{IIIC}, we propose a CTQEC protocol for any [[$n,k,d$]] stabilizer code, requiring $n-k+1$ ancillary qubits.

\subsection{Preliminary derivation}
\label{IIIA}

In this section, we derive requirements to construct CTQEC protocols based on a given [[$n,k,d$]] stabilizer code. Recall that, in the corrected basis, the CPTP map we wish to achieve is given by
\beq\label{eq:IIIA01}
\rho\rightarrow\left(1-\varepsilon^2\right)\rho+\varepsilon^2\sum\limits_{j=0}^{2^{n-k}-1}R_j\rho R_j^\dag,
\eeq
where
\beq\label{eq:IIIA02}
R_j=I^{\otimes k}\otimes\ket{\underline{0}}\bra{\underline{j}},~\forall j\in\{0,1,...,2^{n-k}-1\}.
\eeq

The above map is already in a Kraus decomposition with Kraus operators
\beq\label{eq:IIIA03}
\left\{\sqrt{1-\varepsilon^2}I^{\otimes n},\varepsilon R_j:j=0,1,...,2^{n-k}-1\right\}.
\eeq

However, this set of Kraus operators is not useful to us, since we are performing continuous-time correction and we need each of the Kraus operators $K$ to be weak, in the sense that it has the form
\beq\label{eq:IIIA04}
K=c\left(I^{\otimes n}+\mathcal{O}(\varepsilon)\right),
\eeq
where $c$ is some constant and $\mathcal{O}(\varepsilon)$ is some term upper-bounded by the order of $\varepsilon$. That is, each of the Kraus operators is close to the identity.

Now, let $\{K_l\}_{l\in S}$ be a set of non-zero weak Kraus operators and $S=\{0,1,...,|S|-1\}$ be an index set that is to be determined. The following lemma (Theorem 8.2 from~\cite{NielsenChuang}) will be useful from here on.
\begin{lemma}\label{lemma1}
Let $M,N\in\{0\}\cup\mathbb{N}$, and let $\{E_j\}_{j=0}^M$ and $\{F_l\}_{l=0}^N$ be sets of Kraus operators for CPTP mapping $\mathcal{E}$ and $\mathcal{F}$, repectively. If $M\neq N$, by appending zero operators to the shorter list of the two sets of Kraus operators we may ensure that $M=N$. Then $\mathcal{E}=\mathcal{F}$ if and only if there exists an $(M+1)\times (M+1)$ unitary matrix $U$, with the $j,l$-th element being $u_{j,l}$, such that $E_j=\sum\limits_{l=0}^M u_{j,l} F_l$ for all $j\in\{0,...,M\}$.
\end{lemma}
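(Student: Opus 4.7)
The plan is to prove the two implications separately. The if direction is a direct algebraic check, while the only-if direction proceeds through the Choi--Jamiolkowski (vectorization) correspondence followed by a linear-algebra fact about decompositions of a positive operator.

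For the if direction, I would substitute $E_j=\sum_l u_{j,l}F_l$ into $\mathcal{E}(\rho)=\sum_j E_j\rho E_j^\dag$ and expand. The resulting cross terms carry the coefficient $\sum_j u_{j,l}u_{j,l'}^*=(U^\dag U)_{l',l}=\delta_{l',l}$ by unitarity of $U$, so the double sum collapses to $\sum_l F_l\rho F_l^\dag=\mathcal{F}(\rho)$. Any appended zero operators contribute nothing, which is why the padding convention is harmless.

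For the only-if direction, introduce a reference copy of the input space with basis $\{\ket{i}\}$ and the unnormalized maximally entangled vector $\ket{\Omega}=\sum_i\ket{i}\ket{i}$. Associate to each Kraus operator the vector $\ket{e_j}=(E_j\otimes I)\ket{\Omega}$, and similarly $\ket{f_l}=(F_l\otimes I)\ket{\Omega}$. Since $E\mapsto(E\otimes I)\ket{\Omega}$ is a linear bijection between operators on the input space and vectors on the doubled space, a short computation (equivalent to comparing the Choi matrices of $\mathcal{E}$ and $\mathcal{F}$) shows that $\mathcal{E}=\mathcal{F}$ iff $\sum_j\ket{e_j}\bra{e_j}=\sum_l\ket{f_l}\bra{f_l}$ as positive operators. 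The problem thus reduces to the claim that any two ensembles producing the same positive operator are related by a unitary change of coefficients.

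To establish that reduced claim I would spectrally decompose the common positive operator as $P=\sum_m\lambda_m\ket{v_m}\bra{v_m}$ with $\lambda_m>0$ and $\{\ket{v_m}\}$ orthonormal, and expand $\ket{e_j}=\sum_m\sqrt{\lambda_m}\,a_{j,m}\ket{v_m}$ and $\ket{f_l}=\sum_m\sqrt{\lambda_m}\,b_{l,m}\ket{v_m}$. Matching $P$ forces the columns of $A=(a_{j,m})$ and $B=(b_{l,m})$ to be orthonormal. After padding the shorter list with zero rows so that $A$ and $B$ have the same number $M+1$ of rows, each can be completed to an $(M+1)\times(M+1)$ unitary $\tilde A$ or $\tilde B$; setting $U=\tilde A\tilde B^\dag$ then yields $\ket{e_j}=\sum_l u_{j,l}\ket{f_l}$, and injectivity of the vectorization map lifts this back to $E_j=\sum_l u_{j,l}F_l$. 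The main obstacle is really this last assembly step: one must be careful that the dimensions match so that $U$ can be taken as a single $(M+1)\times(M+1)$ unitary, and it is precisely to make this possible that the hypothesis allows one to append zero operators to the shorter Kraus list.
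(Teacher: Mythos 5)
Your proof is correct. The paper does not actually prove this lemma---it simply quotes it as Theorem 8.2 of Nielsen and Chuang---and your argument is essentially the standard proof from that source: the ``if'' direction by direct expansion using $U^\dag U=I$, and the ``only if'' direction by passing to the vectors $\ket{e_j}=(E_j\otimes I)\ket{\Omega}$ and invoking the unitary freedom of ensembles realizing a fixed positive operator. The only step worth making fully explicit is that each $\ket{e_j}$ lies in the support of $P=\sum_j\ket{e_j}\bra{e_j}$ (since $\bra{\psi}P\ket{\psi}=0$ forces $\braket{\psi}{e_j}=0$ for every $j$), which justifies expanding the $\ket{e_j}$ solely over the eigenvectors with $\lambda_m>0$; with that noted, your assembly of $U=\tilde A\tilde B^\dag$ and the injectivity of vectorization close the argument.
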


The following Theorem then follows from Lemma~\ref{lemma1}.
\begin{theorem}\label{theorem1}
Let $M,N\in\{0\}\cup\mathbb{N}$, and let $\{E_j\}_{j=0}^M$ and $\{F_l\}_{l=0}^N$ be sets of non-zero Kraus operators for the same CPTP mapping $\mathcal{E}$. If $\{F_l\}_{l=0}^N$ is a linearly independent set of operators, then $M\geq N$.
\end{theorem}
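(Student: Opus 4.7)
The plan is to argue by contradiction using Lemma~\ref{lemma1} directly. Suppose for the sake of contradiction that $M<N$. Then the $E$-list has strictly fewer operators than the $F$-list, so to apply Lemma~\ref{lemma1} we must pad $\{E_j\}_{j=0}^M$ by appending $N-M$ zero operators, obtaining a list $\{\tilde E_j\}_{j=0}^N$ where $\tilde E_j=E_j$ for $0\leq j\leq M$ and $\tilde E_j=0$ for $M<j\leq N$. Lemma~\ref{lemma1} then provides an $(N+1)\times(N+1)$ unitary matrix $U=(u_{j,l})$ satisfying
\beq
\tilde E_j=\sum_{l=0}^{N} u_{j,l} F_l,\qquad j=0,1,\ldots,N.
\eeq

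Next I would extract information from the padded rows. For each $j$ with $M<j\leq N$, the left-hand side is the zero operator, so
\beq
\sum_{l=0}^{N} u_{j,l} F_l = 0.
\eeq
Because $\{F_l\}_{l=0}^N$ is a linearly independent set by hypothesis, this forces $u_{j,l}=0$ for every $l\in\{0,1,\ldots,N\}$. Hence the last $N-M$ rows of $U$ are identically zero.

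Finally, I would close the contradiction by invoking unitarity: every row of a unitary matrix has Euclidean norm $1$, so no row of $U$ can be the zero vector. This contradiction shows that $M<N$ is impossible, and therefore $M\geq N$. The argument is short and the only step that requires any care is making sure Lemma~\ref{lemma1} is applied to the correctly padded list (padding the shorter of the two sets), so I do not anticipate a genuine obstacle — the linear independence of $\{F_l\}$ does all the real work by converting the trivial padding equations into the vanishing of full rows of $U$.
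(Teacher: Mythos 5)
Your proof is correct and follows essentially the same route as the paper's: pad the shorter list with zero operators, invoke Lemma~\ref{lemma1}, use linear independence of $\{F_l\}$ to force the padded rows of $U$ to vanish, and contradict unitarity. The only cosmetic differences are that the paper examines just the single row $j=N$ rather than all $N-M$ padded rows, and phrases the final contradiction as $U$ failing to have full rank rather than a row having norm zero.
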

\begin{proof}
Suppose $M<N$. Adding $N-M$ zero operators, $E_{M+1}=...=E_N=0$, to the set $\{E_j\}_{j=0}^M$, we have a new set of Kraus operators $\{E_j\}_{j=0}^N$.

Then by Lemma~\ref{lemma1}, $0=E_N=\sum\limits_{l=0}^N u_{N,l} F_l$, where $u_{j,l}$'s are elements of an unitary matrix $U$.

Now, since the $F_l$'s are linearly independent, $u_{N,l}=0$ for all $l$. This implies that a row of $U$ has elements all equal to $0$. Hence, $U$ does not have full rank and therefore cannot be unitary.

The contradiction implies that $M\geq N$.
\end{proof}

It is easy to check that the set of Kraus operators for the weak CPTP mapping given by Eq.~(\ref{eq:IIIA03}) is linearly independent. Therefore, by Theorem~\ref{theorem1},
\beq\label{eq:IIIA05}
|S|\geq 2^{n-k}+1.
\eeq
Hence, each of the Kraus operators $K_l$ may be written
\begin{align}\label{eq:IIIA06}
K_l &=\sum\limits_{j=0}^{2^{n-k}-1} u_{l,j}~\varepsilon R_j+u_{l,2^{n-k}}\sqrt{1-\varepsilon^2}~I^{\otimes n} \\
 &=u_{l,2^{n-k}}\sqrt{1-\varepsilon^2}~I^{\otimes n}+\sum\limits_{j=0}^{2^{n-k}-1} u_{l,j}~\varepsilon~ I^{\otimes k}\otimes\ket{\underline{0}}\bra{\underline{j}}
\end{align}
where $\{u_{l,k}\}_{l,k\in S}$ are elements of an unitary matrix.

The next step requires the polar decomposition of each $K_l$. For all $l\in S$, $K_l$ can be decomposed as 
\beq\label{eq:IIIA07}
K_l = U_{C,l} M_l,
\eeq
where $U_{C,l}$ is unitary and $M_l$ is positive-semidefinite. The operators $\{M_l\}_{l\in S}$ describe a positive-operator valued measurement (POVM). Each of the $M_l^\dag M_l$ is positive-semidefinite, and the completeness relation is satisfied:
\begin{align}
\sum\limits_{l\in S} M_l^\dag M_l &= \sum\limits_{l\in S} \left(U_{C,l}^\dag K_l\right)^\dag \left(U_{C,l}^\dag K_l\right)\label{eq:IIIA09} \\
&= \sum\limits_{l\in S} K_l^\dag K_l\label{eq:IIIA10} = I^{\otimes n}.
\end{align}

To implement the weak correction map given by Eq.~(\ref{eq:IIIA01}), we recall the four steps of CTQEC that were introduced in Sec.~\ref{IIB}:

(1) Couple the primary system to an ancillary system.

(2) Apply a weak unitary to both systems.

(3) Weakly measure and then discard the ancillary system.

(4) Apply a weak unitary correction, conditioned on the measurement result, to the system.

The polar decomposition gives a set of weak measurement operators $\{M_l\}_{l\in S}$ and a set of weak unitary operators $\{U_{C,l}\}_{l\in S}$. The mapping defined by the Kraus operators can be treated as first measuring with $\{M_l\}_{l\in S}$ and then applying a unitary correction conditioned on the measurement outcome. Therefore, the corrections in step 4 are given by the set $\{U_{C,l}\}_{l\in S}$, and steps 1 to 3 implement the measurement given by the set of measurement operators $\{M_l\}_{l\in S}$. The remaining question is how to carry out the three steps for the measurement. We shall answer this question in the rest of this section.

\subsection{Coherent measurement and the ancillary system}
\label{IIIB}

We begin with the required ancillary system. In step 3, ancillary system is measured. Since each outcome of the measurement on the ancillary system should result in a unique measurement operator from $\{M_l\}_{l\in S}$ acting on the primary system, there must exist a set of $|S|$ mutually orthogonal states in the ancillary system. So the dimension of the Hilbert space of the ancillary system must be no less than $|S|$. If the ancillary system has $s_A\in\mathbb{N}$ qubits then
\beq\label{eq:IIIB01}
2^{s_A}\geq |S|~\Rightarrow~s_A\geq log_2|S|.
\eeq

Therefore, by Eq.~(\ref{eq:IIIA05}), the minimum size of the ancilla, $\bar{s}_A$, is
\beq\label{eq:IIIB02}
\bar{s}_A=\lceil log_2\left( 2^{n-k}+1 \right)\rceil=n-k+1.
\eeq

Since we are investigating the general setup, we do not require the ancillary system to be of minimum size. Let $s_A\geq log_2|S|$ be the size of the ancillary system, and let the set $S_A=\{0,1,...,2^{s_A}-1\}$. We denote a set of standard basis states for this system as
\beq\label{eq:IIIB03}
\left\{\ket{\underline{l}_a}\right\}_{l\in S_A},
\eeq
where the underline notation is defined as in Sec.~\ref{II}, and the subscript ``$a$" indicates that the basis state is in the ancillary Hilbert space of dimension $2^{s_A}$. Throughout this paper, we will label the Hilbert space of the primary system $\mathcal{H}_S$ and the Hilbert space of the ancilla $\mathcal{H}_A$.

The initial state of the ancilla, $\ket{A_0}\in \mathcal{H}_A$, can be chosen such that each of the standard basis states $\{\ket{\underline{j}_a}\}_{j\in S}$ is equally weighted:
\beq\label{eq:IIIB04}
\ket{A_0}\equiv\frac{1}{\sqrt{|S|}}\sum\limits_{j\in S} \ket{\underline{j}_a}.
\eeq
Note that $S\subset S_A$, so $\ket{A_0}\neq\ket{+}^{\otimes s_A}$ in general.

Let $U_M$ be the desired weak unitary operator on the combined Hilbert space of both the primary and ancillary systems, $\mathcal{H}_S\otimes\mathcal{H}_A$. Suppose the primary system is initially in an arbitrary state $\ket{\psi}\in\mathcal{H}_S$. Then Eq.~(\ref{eq:IIIB04}) implies that the initial state for the combined system is $\ket{\psi}\otimes\ket{A_0}$. We choose $U_M$ to satisfy
\beq\label{eq:IIIB05}
U_M\left(\ket{\psi}\otimes\ket{A_0}\right)=\sum\limits_{j\in S}M_j\ket{\psi}\otimes\ket{\underline{j}_a}.
\eeq

After measuring the ancillary system in the basis $\{\ket{\underline{j}_a}\}_{j\in S_A}$, the state of the primary system after measurement result $j\in S$ is $M_j\ket{\psi}$, and any other outcomes in $S_A\setminus S$ have zero probability.

To find the weak unitary $U_M$, let
\beq\label{eq:IIIB06}
U_M=e^{i\varepsilon H_M}=I^{\otimes n+s_A}+i\varepsilon H_M-\frac{1}{2}\varepsilon^2 H_M^2+O(\varepsilon^3),
\eeq
where $H_M$ is Hermitian. We can always write $H_M$ in the following form:
\beq\label{eq:IIIB07}
H_M=\sum\limits_{j,l\in S_A} H_{j,l}\otimes\ket{\underline{j}_a}\bra{\underline{l}_a},
\eeq
where $H_{j,l}$'s act on the primary system and satisfy $H_{j,l}=H_{l,j}^\dag$ since $H_M$ is Hermitian.

Up to second order in $\varepsilon$, Eq.~(\ref{eq:IIIB05}) implies that for all $j\in S$,
\beq\label{eq:IIIB08}
\frac{1}{\sqrt{|S|}}\left( I^{\otimes {n+s_A}}+i\varepsilon\sum\limits_{l\in S_A} H_{j,l}-\frac{1}{2}\varepsilon^2\sum\limits_{l,m\in S_A} H_{j,l}H_{l,m} \right)=M_j,
\eeq
and for all $j\in S\setminus S_A$,
\beq\label{eq:IIIB09}
\frac{1}{\sqrt{|S|}}\left( I^{\otimes {n+s_A}}+i\varepsilon\sum\limits_{l\in S_A} H_{j,l}-\frac{1}{2}\varepsilon^2\sum\limits_{l,m\in S_A} H_{j,l}H_{l,m} \right)=0.
\eeq

Solving the above equations may be difficult in general. However, we show in the next section a special case where those equations can be solved.

\subsection{A protocol for quantum-jump CTQEC requiring minimal number of ancillas}
\label{IIIC}

In this section, we propose a specific CTQEC protocol based on the formalism in the previous sections. This protocol requires a minimal ancillary system of size $s_A=\bar{s}_A=n-k+1$, and the weak operators involved are in a form such that the weak Hermitian operator associated with the coherent measurement can be easily found.

The target error-correcting map is given by Eq.~(\ref{eq:III03}). Let $S=\{0,1,...,2^{n-k+1}-1\}$, and consider the following set of Kraus operators $\left\{ K_j\right\}_{j\in S}$ for this mapping:
\begin{multline}\label{eq:IIIC01}
K_j =\frac{1}{\sqrt{2^{n-k+1}}}I^{\otimes k}\otimes\Big(\sqrt{1-\varepsilon^2}~I^{\otimes n-k} \\
+i\varepsilon\sqrt{2^{n-k}}\ket{\underline{0}}\bra{\underline{j}}\Big),
\end{multline}
\begin{multline}\label{eq:IIIC02}
K_{2^{n-k}+j} =\frac{1}{\sqrt{2^{n-k+1}}}I^{\otimes k}\otimes\Big(\sqrt{1-\varepsilon^2}~I^{\otimes n-k} \\
-i\varepsilon\sqrt{2^{n-k}}\ket{\underline{0}}\bra{\underline{j}}\Big),
\end{multline}
for all $j\in\{0,1,...,2^{n-k}-1\}$. Note that in this case $S_A=S$.

Following our formalism, we first find the polar decomposition of these Kraus operators. This results in a unique set of POVM operators $\{ M_j\}_{j\in S}$ and a corresponding set of correcting unitaries $\{ U_{C,j}\}_{j\in S}$ satisfying $K_j=U_{C,j} M_j$ for all $j\in S$. Up to second order in $\varepsilon$, these operators are listed below:
\begin{multline}\label{eq:IIIC03}
M_0=\frac{1}{\sqrt{2^{n-k+1}}}~I^{\otimes k}\otimes\bigg[
\left(1-\frac{1}{2}\varepsilon^2\right) I^{\otimes n-k} \\
+2^{n-k-1} \varepsilon^2\ket{\underline{0}}\bra{\underline{0}}
\bigg],
\end{multline}
\begin{multline}\label{eq:IIIC04}
U_{C,0}=I^{\otimes k}\otimes\bigg[
I^{\otimes n-k}
+ \big(i \sqrt{2^{n-k}}\varepsilon \\
- 2^{n-k-1}\varepsilon^2\big) \ket{\underline{0}}\bra{\underline{0}}
\bigg],
\end{multline}
\begin{multline}\label{eq:IIIC05}
M_{2^{n-k}}=\frac{1}{\sqrt{2^{n-k+1}}} I^{\otimes k}\otimes\bigg[
\left(1-\frac{1}{2}\varepsilon^2\right)I^{\otimes n-k} \\
+ 2^{n-k-1} \varepsilon^2\ket{\underline{0}}\bra{\underline{0}}
\bigg],
\end{multline}
\begin{multline}\label{eq:IIIC06}
U_{C,2^{n-k}}=I^{\otimes k}\otimes\bigg[
I^{\otimes n-k}
+ \big(-i \sqrt{2^{n-k}}\varepsilon \\
- 2^{n-k-1}\varepsilon^2\big) \ket{\underline{0}}\bra{\underline{0}}
\bigg],
\end{multline}
and $\forall~j\in\{1,2,...,2^{n-k}-1\}$,
\begin{multline}\label{eq:IIIC07}
M_j=\frac{1}{\sqrt{2^{n-k+1}}} I^{\otimes k}\otimes\bigg[
\Bigg(1-\frac{1}{2}\varepsilon^2\Bigg) I^{\otimes n-k}\\
- 2^{n-k-3}\varepsilon^2\ket{\underline{0}}\bra{\underline{0}}
+ 3\cdot 2^{n-k-3}\varepsilon^2\ket{\underline{j}}\bra{\underline{j}}\\
+ i\sqrt{2^{n-k-2}}\varepsilon\left(\ket{\underline{0}}\bra{\underline{j}}-\ket{\underline{j}}\bra{\underline{0}}\right)
\bigg],
\end{multline}
\begin{multline}\label{eq:IIIC08}
U_{C,j}=I^{\otimes k}\otimes\bigg[
I^{\otimes n-k}
- 2^{n-k-3}\varepsilon^2\left(\ket{\underline{0}}\bra{\underline{0}}+\ket{\underline{j}}\bra{\underline{j}}\right) \\
+ i \sqrt{2^{n-k-2}}\varepsilon\left(\ket{\underline{0}}\bra{\underline{j}}+\ket{\underline{j}}\bra{\underline{0}}\right)
\bigg],
\end{multline}
\begin{multline}\label{eq:IIIC09}
M_{2^{n-k}+j}=\frac{1}{\sqrt{2^{n-k+1}}} I^{\otimes k}\otimes\bigg[
\Bigg(1-\frac{1}{2}\varepsilon^2\Bigg) I^{\otimes n-k}\\
- 2^{n-k-3}\varepsilon^2\ket{\underline{0}}\bra{\underline{0}}
+ 3\cdot 2^{n-k-3}\varepsilon^2\ket{\underline{j}}\bra{\underline{j}}\\
- i\sqrt{2^{n-k-2}}\varepsilon\left(\ket{\underline{0}}\bra{\underline{j}}-\ket{\underline{j}}\bra{\underline{0}}\right)
\bigg],
\end{multline}
\begin{multline}\label{eq:IIIC10}
U_{C,2^{n-k}+j}=I^{\otimes k}\otimes\bigg[
I^{\otimes n-k}
- 2^{n-k-3}\varepsilon^2\left(\ket{\underline{0}}\bra{\underline{0}}+\ket{\underline{j}}\bra{\underline{j}}\right)\\
- i \sqrt{2^{n-k-2}}\varepsilon\left(\ket{\underline{0}}\bra{\underline{j}}+\ket{\underline{j}}\bra{\underline{0}}\right)
\bigg],
\end{multline}

Note that the correcting unitaries may be written in the form $U_{C,j}=e^{i\varepsilon H_{C,j}}$, where $\forall j\in\{0,1,...,2^{n-k}-1\}$,

\begin{align}
H_{C,j} &=I^{\otimes k}\otimes\frac{\sqrt{2^{n-k}}}{2}\left(\ket{\underline{0}}\bra{\underline{j}}+\ket{\underline{j}}\bra{\underline{0}}\right),\label{eq:IIIC11} \\
H_{C,2^{n-k}+j} &= -I^{\otimes k}\otimes\frac{\sqrt{2^{n-k}}}{2}\left(\ket{\underline{0}}\bra{\underline{j}}+\ket{\underline{j}}\bra{\underline{0}}\right),\label{eq:IIIC12}
\end{align}

Now, as we have claimed, the ancillary system has size
\beq\label{eq:IIIC13}
s_A=n-k+1.
\eeq

In this case, $S_A=\{0,1,...,2^{n-k+1}-1\}=S$. Following Eq.~(\ref{eq:IIIB04}), the ancillary system is prepared in the state
\beq\label{eq:IIIC14}
\ket{A_0}=\frac{1}{\sqrt{2^{n-k+1}}}\sum\limits_{j\in S} \ket{\underline{j}_a}=\ket{+}^{\otimes n-k+1}.
\eeq

Finding $U_M$ is equivalent to finding the corresponding Hermitian operator $H_M$. We have mentioned that one may regard this task as solving Eqs.~(\ref{eq:IIIB08}) and~(\ref{eq:IIIB09}), and in fact, solving only Eq.~(\ref{eq:IIIB08}) is required since $S=S_A$. Since the $M_j$'s are known, by collecting the terms of each order in $\varepsilon$, then solving the following Eqs.~(\ref{eq:IIIC15})--(\ref{eq:IIIC19}) is equivalent to solving Eq.~(\ref{eq:IIIB08})
\beq\label{eq:IIIC15}
\sum\limits_{j\in S} H_{0,j}=\sum\limits_{j\in S} H_{2^{n-k},j}=0,
\eeq
\beq\label{eq:IIIC16}
\begin{aligned}
&\sum\limits_{j,l\in S} H_{0,j}H_{j,l}=\sum\limits_{j,l\in S} H_{2^{n-k},j}H_{j,l} \\
= &I^{\otimes k}\otimes\left(I^{\otimes n-k}-2^{n-k}\ket{\underline{0}}\bra{\underline{0}}\right),
\end{aligned}
\eeq
and $\forall~j\in\{1,2,...,2^{n-k}-1\}$,
\beq\label{eq:IIIC17}
\begin{aligned}
&\sum\limits_{l\in S} H_{j,l}=-\sum\limits_{l\in S} H_{2^{n-k}+j,l} \\
= &I^{\otimes k}\otimes\frac{\sqrt{2^{n-k}}}{2}\left(\ket{\underline{0}}\bra{\underline{j}}-\ket{\underline{j}}\bra{\underline{0}}\right),
\end{aligned}
\eeq
\beq\label{eq:IIIC18}
\begin{aligned}
&\sum\limits_{l,m\in S} H_{j,l}H_{l,m}=\sum\limits_{l,m\in S} H_{2^{n-k}+j,l}H_{l,m}, \\
= &I^{\otimes k}\otimes\bigg(I^{\otimes n-k}+\frac{2^{n-k}}{4}\ket{\underline{0}}\bra{\underline{0}}-\frac{3\cdot 2^{n-k}}{4}\ket{\underline{j}}\bra{\underline{j}}\bigg),
\end{aligned}
\eeq
and $\forall~j,l\in S$,
\beq\label{eq:IIIC19}
H_{j,l}=H_{l,j}^\dag.
\eeq

These equations potentially have many solutions. We solved the equations for a small problem size, and then generalized our solution to arbitrary problem size. The small problem we consider is three-qubit bit-flip code, where $n=3$, $k=1$, and therefore $s_A=n-k+1=3$. In this case, one can solve for the $\left\{ H_{j,l} \right\}$ straightforwardly by hand. Generalizing to arbitrary size involves assigning parameters as elements of the $H_{j,l}$'s similar to the solution to the three-qubit code case, and then solving for the parameters by plugging everything back into Eq.~(\ref{eq:IIIB08}). Here is this solution for the general case:
\beq\label{eq:IIIC20}
\begin{aligned}
H_{0,0} = &-H_{2^{n-k},2^{n-k}} \\
= &\frac{-2}{\sqrt{2^{n-k}}} I^{\otimes k}\otimes\sum\limits_{l=1}^{2^{n-k}-1} \left(\ket{\underline{0}}\bra{\underline{l}}+\ket{\underline{l}}\bra{\underline{0}}\right),
\end{aligned}
\eeq
\beq\label{eq:IIIC21}
H_{0,2^{n-k}} = H_{2^{n-k},0} = 0,
\eeq
and $\forall~j\in\{1,2,...,2^{n-k}-1\}$,
\beq\label{eq:IIIC22}
\begin{aligned}
H_{0,j} = &H_{j,0} \\
= &\frac{2}{\sqrt{2^{n-k}}} I^{\otimes k}\otimes \left(\ket{\underline{0}}\bra{\underline{j}}+\ket{\underline{j}}\bra{\underline{0}}\right),
\end{aligned}
\eeq
\beq\label{eq:IIIC23}
H_{0,2^{n-k}+j} = H_{2^{n-k}+j,0} =0,
\eeq
\beq\label{eq:IIIC24}
\begin{aligned}
H_{j,j} = &-H_{2^{n-k}+j,2^{n-k}+j} \\
= &\frac{2\cdot(1-2^{n-k})}{\sqrt{2^{n-k}}} I^{\otimes k}\otimes \left(\ket{\underline{0}}\bra{\underline{j}}+\ket{\underline{j}}\bra{\underline{0}}\right),
\end{aligned}
\eeq
\beq\label{eq:IIIC25}
H_{2^{n-k},j} = H_{j,2^{n-k}} =0,
\eeq
\beq\label{eq:IIIC26}
\begin{aligned}
H_{2^{n-k},2^{n-k}+j} = &H_{2^{n-k}+j,2^{n-k}} \\
= &\frac{-2}{\sqrt{2^{n-k}}} I^{\otimes k}\otimes \left(\ket{\underline{0}}\bra{\underline{j}}+\ket{\underline{j}}\bra{\underline{0}}\right),
\end{aligned}
\eeq
\beq\label{eq:IIIC27}
\begin{aligned}
H_{j,2^{n-k}+j} = &-H_{2^{n-k}+j,j} \\
&=\frac{\sqrt{2^{n-k}}}{2} I^{\otimes k}\otimes \left(\ket{\underline{0}}\bra{\underline{j}}-\ket{\underline{j}}\bra{\underline{0}}\right),
\end{aligned}
\eeq
and $\forall~j,l\in\{1,2,...,2^{n-k}-1\}$ where $j\neq l$,
\beq\label{eq:IIIC28}
\begin{aligned}
&H_{j,l} = -H_{2^{n-k}+j,2^{n-k}+l} \\
= &\frac{1}{\sqrt{2^{n-k}}} I^{\otimes k}\otimes \left(\ket{\underline{0}}\bra{\underline{j}}+\ket{\underline{j}}\bra{\underline{0}}+\ket{\underline{0}}\bra{\underline{l}}+\ket{\underline{l}}\bra{\underline{0}}\right),
\end{aligned}
\eeq
\beq\label{eq:IIIC29}
\begin{aligned}
&H_{j,2^{n-k}+l} = -H_{2^{n-k}+j,l} \\
= &\frac{1}{\sqrt{2^{n-k}}} I^{\otimes k}\otimes \left(\ket{\underline{0}}\bra{\underline{j}}+\ket{\underline{j}}\bra{\underline{0}}-\ket{\underline{0}}\bra{\underline{l}}-\ket{\underline{l}}\bra{\underline{0}}\right).
\end{aligned}
\eeq

The desired weak unitary is therefore $U_M=e^{i\varepsilon H_M}$, where
\beq\label{eq:IIIC30}
H_M=\sum\limits_{j,l\in S} H_{j,l}\otimes\ket{\underline{j}_a}\bra{\underline{l}_a}.
\eeq

\begin{exmp}\label{exmp2}
In this example, we demonstrate the protocol for the three-qubit bit-flip code.

The primary system is first coupled to an ancillary system of $s_A=n-k+1=3$ qubits:
\beq\label{eq:IIIC31}
\ket{A_0}=\ket{+}^{\otimes 3}.
\eeq

Then the unitary operation $U_M=e^{i\varepsilon H_M}$ is applied to the combined system, in which
\begin{align}
H_M &=\sum\limits_{j,l=0}^7 H_{j,l}\otimes\ket{\underline{j}_a}\bra{\underline{l}_a} \label{eq:IIIC32} \\
&\begin{multlined}
=\sum\limits_{j=1}^3 I\otimes\Big[ \ket{\underline{0}}\bra{\underline{j}}\otimes
\big(\ket{u_j}\bra{v_j}+\ket{v_j}\bra{w_j}\big) \\
\ket{\underline{j}}\bra{\underline{0}}\otimes
\left(\ket{v_j}\bra{u_j}+\ket{w_j}\bra{v_j}\right)\Big],
\end{multlined}\label{eq:IIIC33}
\end{align}
where
\beq\label{eq:IIIC34}
\ket{u_j}=\ket{+}\otimes\left(\ket{\underline{0}}-\sum\limits_{\substack{l=1\\ j<l}}^3 \ket{\underline{l}}\right),~~\ket{v_j}=\ket{-}\otimes\ket{\underline{j}},
\eeq
and
\beq\label{eq:IIIC35}
\ket{w_j}=\ket{+}\otimes\left(\ket{\underline{0}}+\sum\limits_{\substack{l=1\\ l\neq j}}^3 \ket{\underline{l}}-2\ket{\underline{j}}\right).
\eeq

Finally, the correcting unitary, $U_{C,j}$, corresponding to each measurement result $j\in\{0,1,2,...,7\}$ is given by Eqs.~(\ref{eq:IIIC04}),~(\ref{eq:IIIC06}),~(\ref{eq:IIIC08}) and~(\ref{eq:IIIC10}).
\end{exmp}

\section{Comparison with other CTQEC protocols}
\label{IV}

In this section, we will compare the proposed CTQEC protocol with two previous ones: those of Oreshkov~\cite{OreshkovChapter} and of Ahn, Doherty, and Landahl (ADL)~\cite{Ahn02}. In particular, for Oreshkov's method, we have investigated whether feedback strategy can improve its performance. However, we will show results that suggest that any improvement in performance may be minimal.

For a fair comparison, we set the error-correcting ``strength" of the protocols equal to each other. The measure of strength used is the diamond norm~\cite{Watrous11,Watrous09} of the difference between each weak map and the identity.

The contents related to Oreshkov's and ADL's schemes will be addressed in Sec.~\ref{IVA} and Sec.~\ref{IVB}, respectively.

\subsection{Oreshkov's protocol}
\label{IVA}

In~\cite{OreshkovChapter}, Oreshkov presents a method that does not perform the precise quantum-jump CTQEC map in Eq.~(\ref{eq:IIIA01}); however, the protocol gives the map when the primary system is initially in the code space and the error model has all correctable errors of the code occur as Poisson processes with the same rate $\lambda$. The procedure is very similar to that in this paper. One notable difference is that Oreshkov's protocol requires $2^{n-k}-1$ ancillary qubits; the result in the previous section shows that this number can be reduced (when $n-k>2$) to as low as $n-k+1$, which is the minimum number of ancillas possible! One might argue that each operator of the interaction Hamiltonian in Oreshkov's method is easier to realize, as it involves interaction between the primary system and a single ancilla qubit. However, each such operator is still a multi-body interaction, and the Hamiltonian includes exponentially many such terms.

Since Oreshkov's protocol and ours produce the same effective CTQEC mapping, the performances of both devices are the same.

One question about Oreshkov's protocol that we have investigated is whether feeding back all results from previous measurements would allow better correction. In general, the answer should be positive, but how much greater is the benefit? To answer this question, we first give a brief description of the protocol.

The procedure also follows the four steps (of coherent measurement and unitary correction) listed near the end of Sec.~\ref{IIIA}. In step 1, the ancillary system is prepared in the state
\beq\label{eq:IVA01}
\ket{\Psi_O}=\ket{+}^{\otimes 2^{n-k}-1}.
\eeq

In step 2, a joint weak unitary, $U_{M,O}=e^{-i\varepsilon H_{M,O}}$, is done, with Hamiltonian
\beq\label{eq:IVA02}
H_{M,O}=-\frac{1}{2} I^{\otimes k}\otimes\sum\limits_{j=1}^{2^{n-k}-1} X_{\underline{j}}\otimes Y_j^a,
\eeq
where $Y_j^a$ is the single Pauli-$Y$ operator acting on the $j$-th ancillary qubit, and
\beq\label{eq:IVA03}
X_{\underline{j}}=\ket{\underline{j}}\bra{\underline{0}}+\ket{\underline{0}}\bra{\underline{j}}.
\eeq

In step 3, a $Z$ measurement is performed on each of the ancillary qubits. If the measurement result on the $j$-th qubit is $m_j\in\{0,1\}$, then the unitary correction in step 4 is $U_{C,O}=e^{-i\varepsilon H_{C,O}}$ with
\beq\label{eq:IVA04}
H_{C,O}=\frac{1}{2}I^{\otimes k}\otimes\sum\limits_{j=1}^{2^{n-k}-1} (-1)^{m_j} Y_{\underline{j}_a},
\eeq
where
\beq
Y_{\underline{j}}=i\left(\ket{\underline{j}}\bra{\underline{0}}-\ket{\underline{0}}\bra{\underline{j}}\right).
\eeq

In this framework, the correcting operation is fixed. However, the parameter $\varepsilon$ in $U_{C,O}$ is allowed to vary with time, $U_{C,O}(t)=e^{-i\delta(t) H_{C,O}}$ for some function $\delta(t)$. Introducing this freedom can improve the performance, but we will see shortly that results for simple codes show minimal gain in performance, which suggests that a fixed correcting parameter already provides nearly optimal error-correcting capability.

Take the case of the three-qubit bit-flip code for example. Let $\rho_I$ be the density matrix of the information system that is to be protected from error. The initial state of the primary system in the corrected basis (and also the encoded basis) is then
\beq\label{eq:IVA05}
\rho(0)=\rho_I\otimes\ket{\underline{0}}\bra{\underline{0}}.
\eeq

Suppose the error model is independent bit-flip Poisson processes on each qubit with rate $\lambda$. Then the density matrix of the primary system at time $t$ is
\beq\label{eq:IVA06}
\rho(t)=\sum\limits_{j=0}^3 w_j(t)\rho_j,
\eeq
where the $\rho_j$'s are normalized density matrices corresponding to $j$ different errors applied to the initial code state:
\begin{align}
\rho_0 &=\rho(0),\label{eq:IVA07} \\
\rho_1 &=\frac{1}{3}\sum\limits_{j=1}^3 E_{j(C)}\rho(0)E_{j(C)},\label{eq:IVA08} \\
\rho_2 &=\frac{1}{3}\sum\limits_{\substack{j,l=1\\ j<l}}^3 E_{j(C)}E_{l(C)}\rho(0)E_{j(C)}E_{l(C)},\label{eq:IVA09} \\
\rho_3 &=E_{1(C)}E_{2(C)}E_{3(C)}\rho(0)E_{1(C)}E_{2(C)}E_{3(C)},\label{eq:IVA10}
\end{align}
where the $E_{j(C)}$'s are the bit-flip errors represented in the corrected basis (as defined in Example~\ref{exmp1}). The $w_j(t)$'s are real functions that form a set of probabilities at each time $t$. Note that $w_0(0)=1$.

Suppose at time $t$ the density matrix of the primary system is given by Eq.~(\ref{eq:IVA06}). Then by applying the CTQEC map in Eq.~(\ref{eq:III03}) to this density matrix,
\begin{align}
w_0(t) &\rightarrow w_0(t)\left(1-\frac{3(\varepsilon-\delta(t))^2}{4}\right)+w_1(t)\frac{(\varepsilon+\delta(t))^2}{4},\label{eq:IVA15}\\
w_1(t) &\rightarrow w_1(t)\left(1-\frac{(\varepsilon+\delta(t))^2}{4}\right)+w_0(t)\frac{3(\varepsilon-\delta(t))^2}{4},\label{eq:IVA16}\\
w_2(t) &\rightarrow w_2(t)\left(1-\frac{(\varepsilon+\delta(t))^2}{4}\right)+w_3(t)\frac{3(\varepsilon-\delta(t))^2}{4}\label{eq:IVA17}\\
w_3(t) &\rightarrow w_3(t)\left(1-\frac{3(\varepsilon-\delta(t))^2}{4}\right)+w_2(t)\frac{(\varepsilon+\delta(t))^2}{4}.\label{eq:IVA18}
\end{align}

Since the goal of error correction is to map the state back to the $\ket{\underline{0}}$-subspace, the parameter in this CTQEC map is chosen to maximize the updated $w_0(t)$. Hence, the optimal strategy would be to choose $\delta(t)$ as follows:
\beq\label{eq:IVA19}
\begin{aligned}
\delta(t) &=\argmax_\delta\left\{w_0(t)\left(1-\frac{3(\varepsilon-\delta)^2}{4}\right)+w_1(t)\frac{(\varepsilon+\delta)^2}{4}\right\} \\
&=\frac{3 w_0(t)+w_1(t)}{3 w_0(t)-w_1(t)}\varepsilon.
\end{aligned}
\eeq

This is optimal correction. Now let $\varepsilon=\sqrt{\kappa\delta t}$, in order to realize the infinitesimal CTQEC map in Eq.~(\ref{eq:IIB01}) and let $\delta t\rightarrow 0$, the first-order differential equations for the $w_j(t)$'s including both the error process and error correction are
\beq\label{eq:IVA21}
\begin{aligned}
w_0^{\prime} &=-3\lambda w_0+\lambda w_1+\kappa\frac{3w_0 w_1}{3w_0-w_1}, \\
w_1^{\prime} &=3\lambda w_0-3\lambda w_1+2\lambda w_2-\kappa\frac{3w_0 w_1}{3w_0-w_1},\\
w_2^{\prime} &=3\lambda w_3-3\lambda w_2+2\lambda w_1-\kappa\frac{9w_0^2 w_2-3w_1^2 w_3}{(3w_0-w_1)^2},\\
w_3^{\prime} &=-3\lambda w_3+\lambda w_2+\kappa\frac{9w_0^2 w_2-3w_1^2 w_3}{(3w_0-w_1)^2}.
\end{aligned}
\eeq

The last terms in Eq.~(\ref{eq:IVA21}) are $\kappa w_1$, $-\kappa w_1$, $-\kappa w_2$ and $\kappa w_2$, respectively, if one takes $\delta(t)=\varepsilon$ to be a fixed value as Oreshkov considered.

Fig.~\ref{fig:IVA01} compares the performances of optimal and constant correction in Oreshkov's protocol in terms of codeword fidelity (i.e., $w_0(t)$) and correctable overlap (i.e., $w_0(t)+w_1(t)$, which is the codeword fidelity after a strong error correction is applied) between the cases of optimal and constant correction in Oreshkov's device. The rate of error-correction in this comparison is $\kappa=100\lambda$. We can see that the difference in performance is negligibly small. In fact, we can see from Figs.~\ref{fig:IVA01} and~\ref{fig:IVA02} that $w_1(t)$ is very small compared to $w_0(t)$ at any time $t$. This should not be surprising, since the correction terms in Eq.~(\ref{eq:IVA21}) drive the weights $w_1$ and $w_2$ towards $w_0$ and $w_3$, respectively, and the rate of this driving process is much larger than the rate of the error process ($\kappa\gg\lambda$). If we use the relationship $w_1\ll w_0$ to approximate the differential equations given by Eq.~(\ref{eq:IVA21}), we get
\begin{align}
w_0^{\prime} &\approx -3\lambda w_0+\lambda w_1+\kappa w_1+\kappa\frac{w_1^2}{3w_0},\label{eq:IVA25}\\
w_1^{\prime} &\approx 3\lambda w_0-3\lambda w_1+2\lambda w_2-\kappa w_1-\kappa\frac{w_1^2}{3w_0},\label{eq:IVA26}\\
w_2^{\prime} &\approx 3\lambda w_3-3\lambda w_2+2\lambda w_1-\kappa w_2-\kappa\frac{w_1(2w_0 w_2-w_1 w_3)}{(3w_0^2)^2},\label{eq:IVA27}\\
w_3^{\prime} &\approx -3\lambda w_3+\lambda w_2+\kappa w_2+\kappa\frac{w_1(2w_0 w_2-w_1 w_3)}{(3w_0^2)^2}.\label{eq:IVA28}
\end{align}

The last terms in the above approximation are the result of optimal correction. Since $w_1\ll w_0$, the correction is dominated by the second to last terms, and the effect of the extra terms is negligibly small. The extra effort to track the state and apply precise corrections yields little benefit.

\begin{figure}
\includegraphics[width= 3.5in]{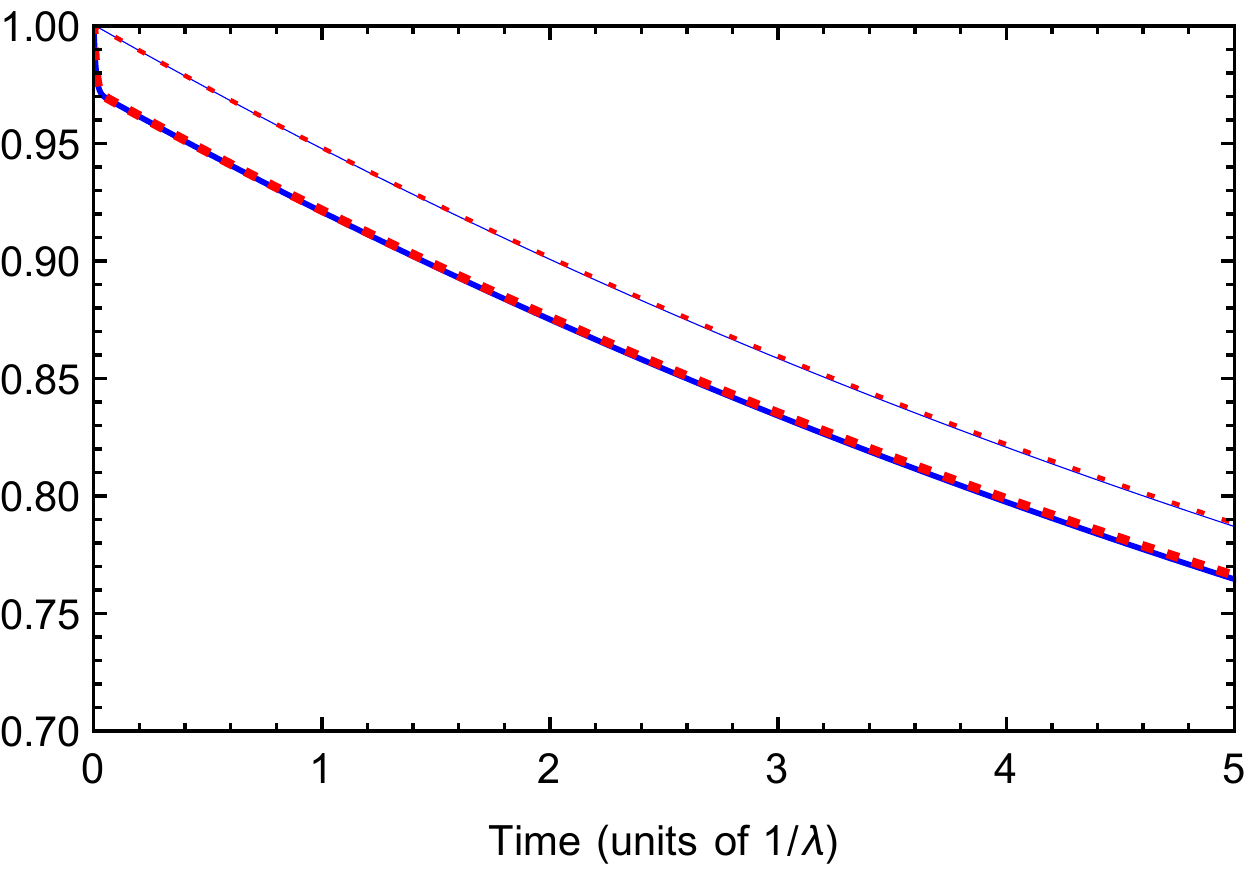}
\caption{\label{fig:IVA01}(Color online) Comparison of CTQEC performance between optimal and constant correction for Oreshkov's device using the three-qubit bit-flip code under independent bit-flip channels with error rate $\lambda$ and correcting parameter $\kappa=100\lambda$. The thin and thick dashed red lines are, respectively, the correctable overlap and codeword fidelity of the case with optimal correction. The thin and thick solid blue lines are, respectively, the correctable overlap and codeword fidelity of the case with constant correction.}
\end{figure}

\begin{figure}
\includegraphics[width= 3.5in]{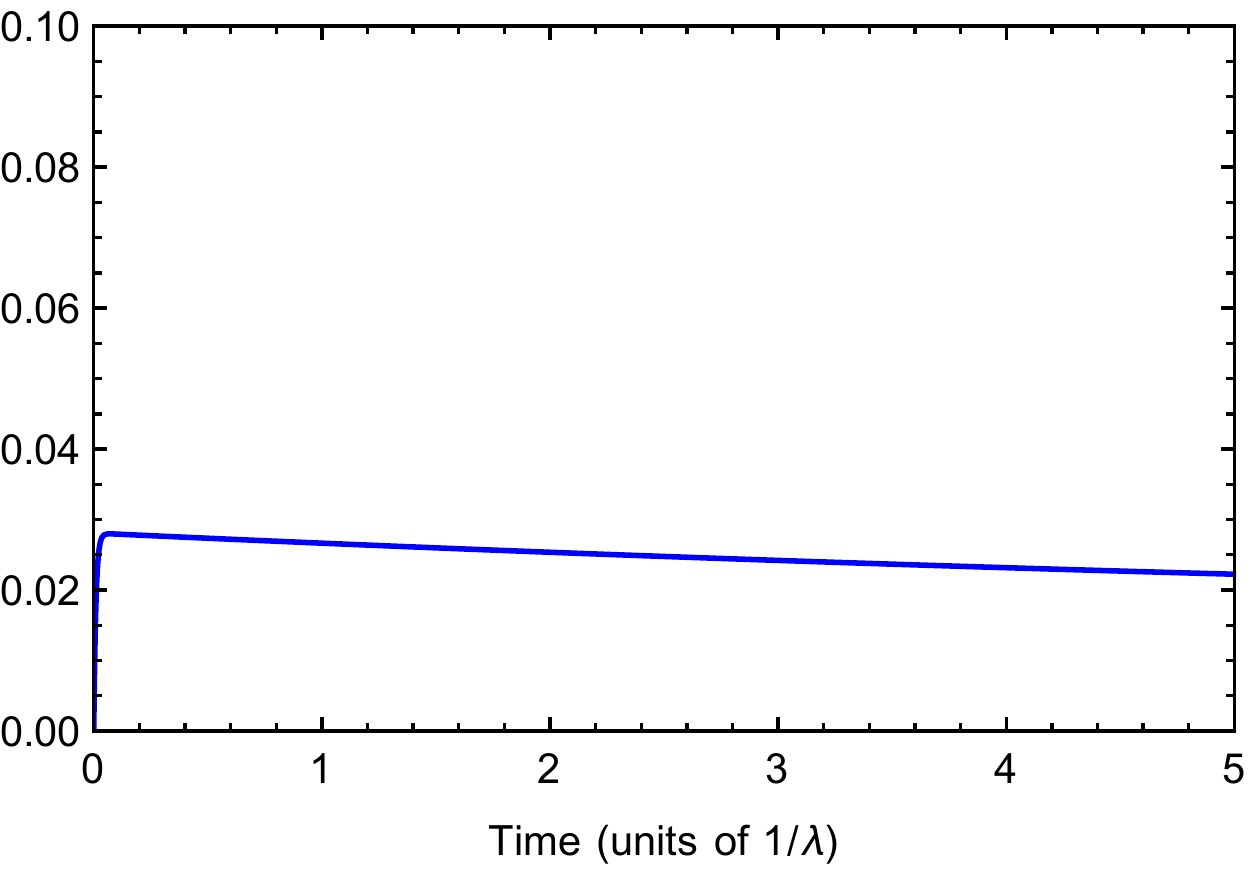}
\caption{\label{fig:IVA02}(Color online) Plot of $w_1(t)$ in the case of optimal correction for Oreshkov's device with $\kappa=100\lambda$.}
\end{figure}

In Fig.~\ref{fig:IVA03}, we show a more complicated example using the five-qubit ``perfect" code that can correct any single-qubit error. The correcting parameter is again chosen to be $\kappa=100\lambda$. We see that the benefit of optimal correction is still very small.

Note that, in practice, $\kappa$ will probably be larger than $100\lambda$, and the benefit of optimal correction will be even smaller. Therefore, we conclude that the constant correction method considered in Oreshkov's protocol has good enough error-correcting performance.

\begin{figure}
\includegraphics[width= 3.5in]{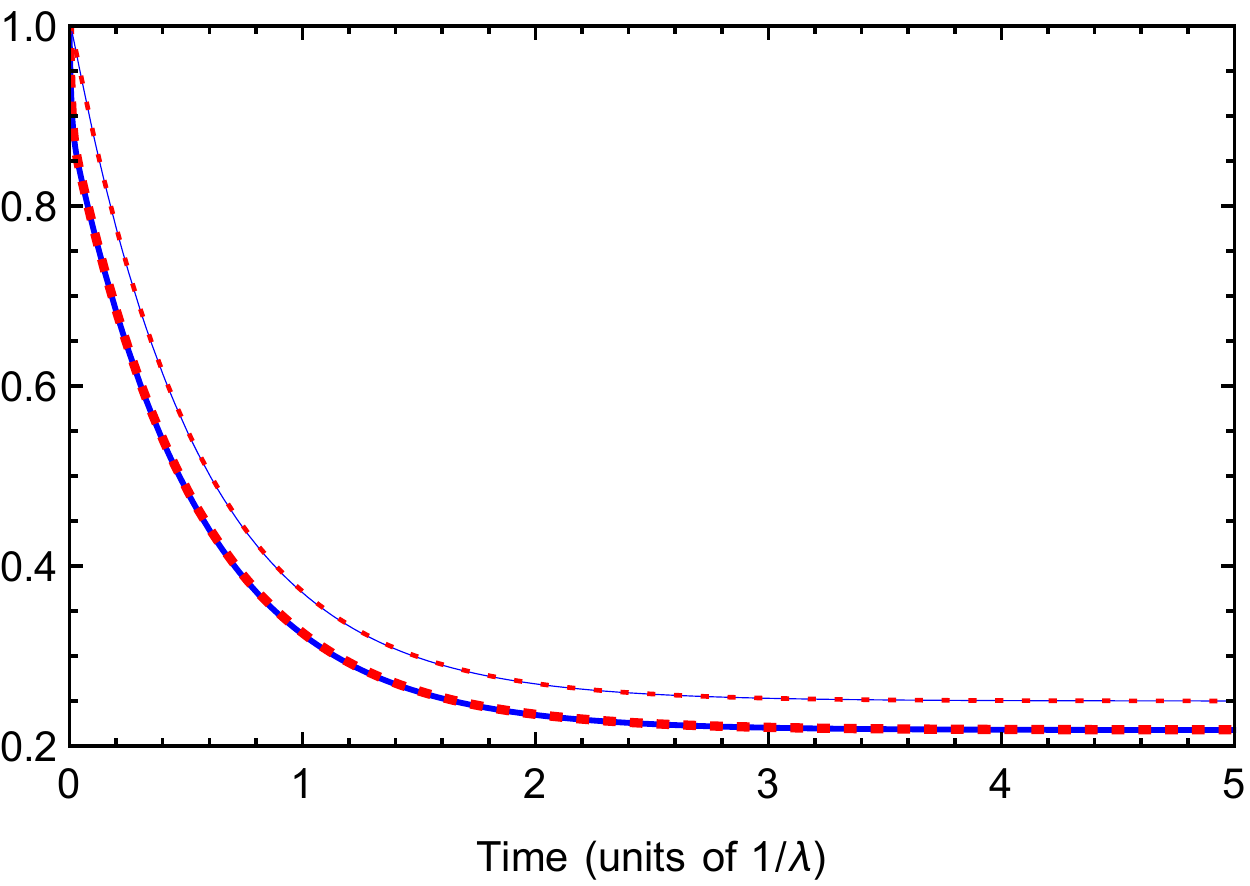}
\caption{\label{fig:IVA03}(Color online) Comparison of CTQEC performance between optimal and constant correction for Oreshkov's protocol using the five-qubit ``perfect" code under independent depolarizing channels with error rate $\lambda$ and correcting parameter is $\kappa=100\lambda$. The thin and thick dashed red lines are, respectively, the correctable overlap and codeword fidelity of the case with optimal correction. The thin and thick solid blue lines are, respectively, the correctable overlap and codeword fidelity of the case with constant correction.}
\end{figure}

\subsection{ADL protocol}
\label{IVB}

In~\cite{Ahn02}, Ahn, Doherty, and Landahl proposed a CTQEC protocol that weakly measures the stabilizer generators. The correcting Hamiltonian is a sum of correctable errors with tuning parameters that are controlled by the measurement results.

This protocol results in a stochastic master equation giving the quantum trajectories of the system. Note that the master equation of the system in our CTQEC protocol given by Eq.~(\ref{eq:IIB02}) does not have diffusive terms, which may arise in general in a quantum trajectory, since it is a map that averages over all measurement outcomes. The averaged map makes analysis and simulation easier, and it should accurately model the protocol since it represents the ``expected" evolution of the system. Therefore, it is reasonable to use the averaged map of the ADL protocol in the comparison.

To compare the performance between different protocols, we should adjust the parameters in both protocols so that the comparison is ``fair" in the sense that the strengths of the correcting maps are comparable. One commonly measures the strength of a quantum map $\Phi$ by the diamond norm $||\cdot||_\diamond$~\cite{Watrous11,Watrous09} of the difference between $\Phi$ and the identity map $\mathbb{I}$. Let this function be denoted by $\mathcal{D}(\cdot)$, where for all quantum maps $\Phi$,
\beq
\mathcal{D}(\Phi)\equiv||\Phi-\mathbb{I}||_\diamond.
\eeq

Now, we will compare our protocol to the ADL device for the three-qubit bit-flip code. Let $\Phi$ be the averaged CTQEC map for our protocol over an arbitrary small time $\delta t$. Then for any density matrix $\rho$,
\beq
\Phi(\rho)=(1-\kappa\delta t)\rho+\kappa\delta t \mathcal{R}(\rho),
\eeq
where we recall that $\mathcal{R}(\cdot)$ is the strong error-correcting map given by Eq.~(\ref{eq:III01}). Over the same time $\delta t$, given $\rho_c$ as the estimated density matrix based on previous measurement results, let $\Phi_{ADL,\rho_c}$ be the averaged CTQEC map for the ADL protocol that takes as input the density matrix $\rho$.
\beq
\begin{split}
\Phi_{ADL,\rho_c}(\rho)=\rho+\kappa_2\left(\tilde{\mathcal{L}}(g_1)+\tilde{\mathcal{L}}(g_2)+\tilde{\mathcal{L}}(g_1 g_2)\right)\rho\delta t\\
-\gamma_2 i \left[F\big(\eta_1(\rho_c),\eta_2(\rho_c),\eta_3(\rho_c)\big),\rho\right]\delta t,
\end{split}
\eeq
where $[\cdot,\cdot]$ is the commutator, $\tilde{\mathcal{L}}(\cdot)$ is a superoperator where for any operators $g$ and $\rho$, $\tilde{\mathcal{L}}(g)\rho$ is defined as
\beq
\tilde{\mathcal{L}}(g)\rho\equiv g\rho g^\dag-\frac{1}{2}g^\dag g\rho-\frac{1}{2}\rho g^\dag g,
\eeq
and the feedback Hamiltonian
\beq
F\left(\eta_1(\rho_c),\eta_2(\rho_c),\eta_3(\rho_c)\right)\equiv\sum\limits_{j=1}^3\eta_j(\rho_c) E_j,
\eeq
where $\eta_j(\cdot,\cdot)\in\{1,-1\}$. We do not specify the functions $\left\{\eta_j(\cdot,\cdot)\right\}$ as we will see that they are irrelevant in the case to be considered.

Note that the $g_j$'s and $E_j$'s are the stabilizer generators and correctable errors defined in Example~\ref{exmp1}.

We consider the ADL case given as an example in~\cite{Ahn02}. In this case, $\kappa_2=64\lambda$ and $\gamma_2=128\lambda$, where $\lambda$ is again the rate of the i.i.d. bit-flip error processes. If we define the mapping $\Phi_{ADL}^{(j,l,m)}$, for any operator $\rho$ and any $j,l,m\in\{1,-1\}$, as
\beq
\begin{split}
\Phi_{ADL}^{(j,l,m)}(\rho)\equiv\rho+\kappa_2\left(\tilde{\mathcal{L}}(g_1)+\tilde{\mathcal{L}}(g_2)+\tilde{\mathcal{L}}(g_1 g_2)\right)\rho\delta t\\
-\gamma_2 i\left[F\left(j,l,m\right),\rho\right]\delta t.
\end{split}
\eeq
Using semidefinite programming for computing diamond norm~\cite{Watrous09,cvx}, we find that for all $j,l,m\in\{1,-1\}$,
\beq
\mathcal{D}(\Phi_{ADL}^{(j,l,m)})=2\times 7.6847~\kappa_2\delta t.
\eeq

Hence, the map $\Phi_{ADL,\rho_c}$ would have
\beq
\mathcal{D}(\Phi_{ADL,\rho_c})=\mathcal{D}(\Phi_{ADL}^{(\eta_1(\rho_c),\eta_2(\rho_c),\eta_3(\rho_c))})=2\times 7.6847~\kappa_2\delta t.
\eeq

On the other hand, calculation shows
\beq
\mathcal{D}(\Phi)=2\kappa\delta t.
\eeq

Therefore, a fair comparison would be to choose $\kappa$ such that $\mathcal{D}(\Phi)=\mathcal{D}(\Phi_{ADL,\rho_c})$, which implies that $\kappa$ should be
\beq
\kappa=7.6847\kappa_2=7.6847\times 64\lambda.
\eeq

\begin{figure}[h!]
\includegraphics[width= 3.5in]{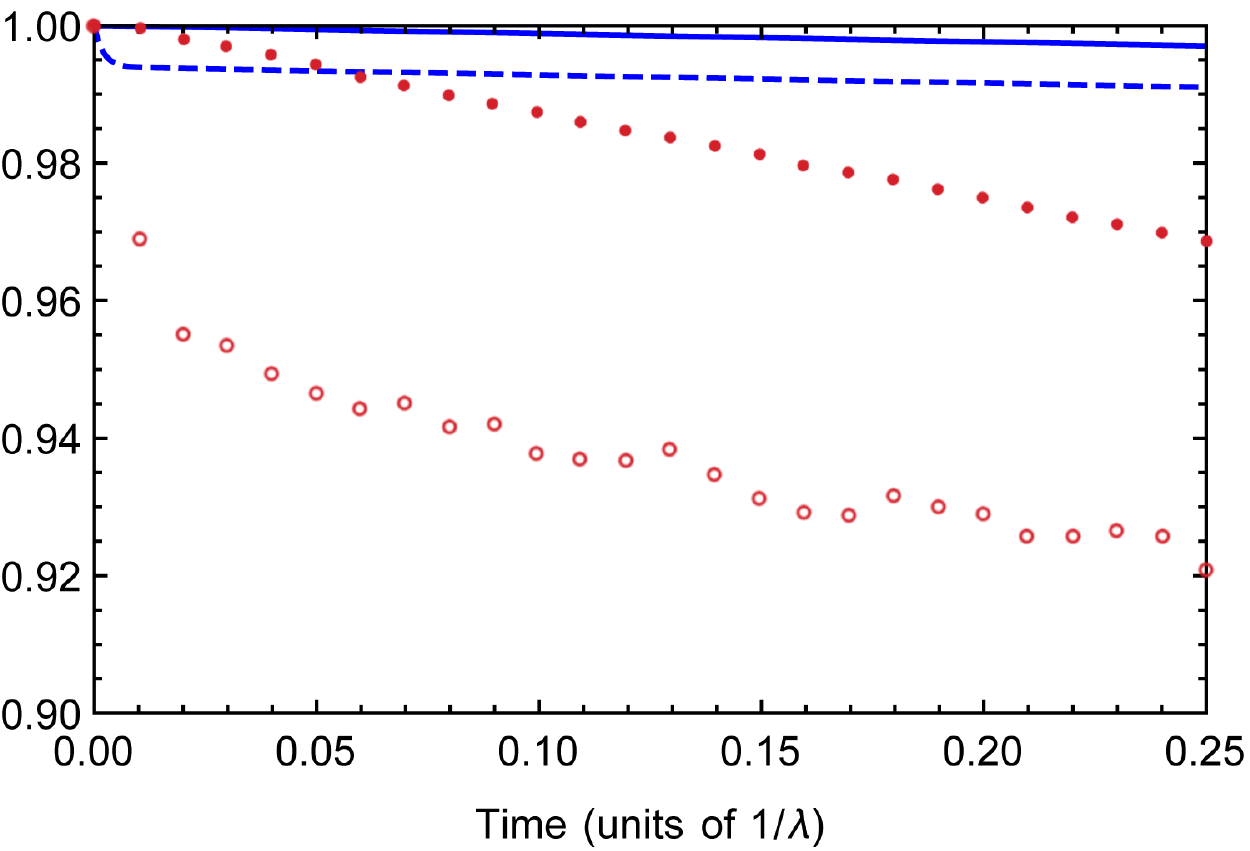}
\caption{\label{fig:IVB01}(Color online) The CTQEC performance of our protocol and ADL protocol using three-qubit bit-flip code under independent bit-flip channels with error rate $\lambda$. The parameters of the ADL protocol are $\kappa_2=64\lambda$ and $\gamma_2=128\lambda$, and the parameter of our protocol is $\kappa=7.6847\times 64\lambda$. The solid and dashed blue lines are, respectively, the correctable overlap and the codeword fidelity of our protocol. The filled and unfilled red circles are, respectively, the correctable overlap and the codeword fidelity of ADL's simulation of their protocol~\cite{Ahn02} sampled at time increments in $0.01$ $(1/ \lambda)$.}
\end{figure}

Given the initial state $\rho_0=\ket{\underline{0}}\bra{\underline{0}}=\ket{000}\bra{000}$, Fig.~\ref{fig:IVB01} shows the codeword fidelity and correctable overlap (i.e., the codeword fidelity after a strong error correction is applied) of our protocol when $\kappa=7.6847\times 64\lambda$ and ADL's simulation of their protocol~\cite{Ahn02}. We see that both the code fidelity and correctable overlap are much better using our protocol. The reason for this improvement is not entirely obvious. One reason we suspect for the improvement is that the intuitive approach in ADL -- weakly measuring the stabilizer generators -- is not optimal.

We can see this in the simple example of protecting a single qubit state $\ket{0}$ from bit-flip errors. The error-correcting code in this case has a single stabilizer generator $Z$. On the Bloch sphere, the bit-flip error process gradually brings the state from +1 towards −1 on the $z$-axis. The most obvious approach using strong error correction would measure the stabilizer $Z$ and then apply a correction $X$ if necessary. However, this is not the only approach: one could actually measure along any axis of the Bloch sphere, followed by a unitary correction conditioned on the measurement outcome. For strong error-correction, all of these protocols work equally well.

In the limit of weak measurements, however, the weak limit of these protocols are no longer equivalent. It can be shown that weak measurement along a direction perpendicular to the $z$-axis, followed by a weak unitary correction, will yield the greatest averaged increase in the purity of the state~\cite{Jacobs04}. This implies that for continuous-time quantum error correction, instead of weakly measuring the stabilizer $Z$ which is along the direction parallel to the $z$-axis, one should measure along a direction on the $x$-$y$ plane. For example, one could weakly measure $X$ or $Y$ to yield the most averaged increase in the purity of the state, followed by a weak rotation about the $Y$ or $X$ axis to maximize the fidelity with $\ket{0}$. Going from this simple example to a more general code suggests that weakly measuring the stabilizer generators will not increase the overlap with the code space as much as weakly measuring a complementary set of operators.

\section{Conclusion}
\label{V}

In this paper, we investigate and formalize the structure of quantum-jump CTQEC protocol involving weak measurements and weak unitary corrections. We show that for a given [[$n,k,d$]] quantum stabilizer code the minimum required size of the ancillary system is $n-k+1$ qubits, and we propose a method achieving this minimal requirement. We compare the performance of the proposed protocol to other known CTQEC protocols by Oreshkov and ADL. Our protocol is effectively equivalent to Oreshkov's quantum-jump CTQEC protocol, and hence the performance is the same. However, we reduce the number of ancillas required by an exponential factor. We also consider an improved protocol by optimizing the error-correcting parameter at each step, and we conclude after some analysis that this does not yield much benefit in general. For the comparison with the ADL protocol, we use the diamond norm to equalize the error-correcting strengths of the two maps and then compare their performances for the same code and noise model. Our protocol significantly outperforms the ADL protocol using the three-qubit bit-flip code against $X$ noise.

A critical issue that requires future investigation is that the Hamiltonian terms coupling the primary system to the ancillas are in general multi-body operators, which would be hard to implement experimentally. One possible approach uses the fact that our proposed formalism actually gives a family of quantum-jump CTQEC protocols. It would be interesting to see if there are members of the family where the terms of the Hamiltonian are low weight while still using only a modest number of ancillas.

\begin{acknowledgments}
K.-C.H. and T.A.B. thank an anonymous referee for useful suggestions. This research was supported by the ARO MURI under Grant No. W911NF-11-1-0268.
\end{acknowledgments}

\providecommand{\noopsort}[1]{}\providecommand{\singleletter}[1]{#1}%

\end{document}